\newtheorem{problem}{Problem}
\begin{document}

\title{Algorithms for Parameterized String Matching with Mismatches}
\author{Apurba Saha\thanks{These authors contributed equally to this work}, Iftekhar Hakim Kaowsar\footnotemark[1], Mahdi Hasnat Siyam\footnotemark[1], M. Sohel Rahman\footnotemark[1] \\ Department of Computer Science and Engineering \\ Bangladesh University of Engineering and Technology (BUET)
}
\maketitle

\runninghead{Saha, Kaowsar, Siyam, Rahman}{Algorithms for Parameterized Matching with Mismatches}
\begin{abstract}
  Two strings are considered to have parameterized matching when there exists a bijection of the parameterized alphabet onto itself such that it transforms one string to another. Parameterized matching has application in software duplication detection, image processing, and computational
biology. We consider the problem for which a pattern $p$, a text $t$ and a mismatch tolerance limit $k$ is given and the goal is to find all positions in text $t$, for which pattern $p$, parameterized matches with $|p|$ length substrings of $t$ with at most $k$ mismatches. Our main result is an algorithm for this problem with $O(\alpha^2 n\log n + n \alpha^2 \sqrt{\alpha} \log \left( n \alpha \right))$ time complexity, where $n = |t|$ and $\alpha = |\Sigma|$ which is improving for $k=\tilde{\Omega}(|\Sigma|^{5/3})$ the algorithm by Hazay, Lewenstein and Sokol. We also present a hashing based probabilistic algorithm for this problem when $k = 1$ with $O \left( n \log n \right)$ time complexity, which we believe is algorithmically beautiful.
\end{abstract}

\begin{keywords}
parameterized matching, bipartite matching, fast fourier transform, hashing, segment tree
\end{keywords}

\section{Introduction}
In parameterized string matching setting, a string is assumed to contain both `fixed' and `parameterized' symbols. So, the underlying alphabet, $\Sigma$ consists of two kinds of symbols: static symbols ($s$-symbols; belonging to the set $\Sigma_s$) and parameterized symbols ($p$-symbols; belonging to the set $\Sigma_p$). In what follows, unless otherwise specified, we will assume the strings in this setting. Given two strings, denoted as $s$ and $s'$, a \textit{parameterized match} or $p$-match for short between them exists, if there is a bijection between the alphabets thereof. More formally, two strings $s, s' \in \Sigma_s \cup \Sigma_p,~ |s| = |s'| = n$ 
are said to $p$-match if a one-to-one function $\pi:\Sigma_s \cup \Sigma_p \to \Sigma_s \cup \Sigma_p$ exists that is identity on $\Sigma_s$ and transforms $s$ to $s'$. Extending this concept, we say that two strings with equal length $p$-matches with mismatch tolerance $k$, if there is a $p$-match after discarding at most $k$ indices. Now we can easily extend this idea to the classic (approximate) pattern matching setting (i.e., allowing mismatches) as follows. Given a text (string) $t$ and a pattern (string) $p$, we need to find all locations of $t$ where a $p$-match with pattern $p$ exists, having mismatch tolerance $k$.
\par 
Apart from its inherent combinatorial beauty, parameterized matching problem has some interesting applications as well. For example, it is helpful to detect duplicate codes in large software systems. It has been motivated by the fact that programmers prefer to duplicate codes in large software systems. In order
to avoid new bugs and revisions, they prefer to simply copy working section of a code written
by someone else, while it is encouraged to understand the principles of working section of a code \cite{mendivelso2020brief, fredriksson2006efficient}. It also has applications in computational biology and image processing. In image processing, colors can be mapped with presence of errors \cite{hazay2007approximate}. Furthermore, parameterized matching has been used in solving graph isomorphism \cite{mendivelso2013solving}.
\par 
Baker \cite{baker1993program} first introduced the idea of parameterized pattern matching to detect source code duplication in a software. The problem of finding every occurrence of parameterized string over a text is solved by Baker using the $p$-suffix tree \cite{baker1996parameterized}, which can be constructed in $O \big(|t| (\left| \Sigma_p \right|+ \log (|\Sigma_s| + |\Sigma_p|) ) \big)$ time.  
Subsequently, Cole and Hariharan \cite{cole2000faster} improved the construction time of $p$-suffix tree to $O(|t|)$. Apostolico et al.\cite{apostolico2007parameterized} solved the parameterized string matching allowing mismatches problem when both $t$ and $p$ are run length encoded. Their algorithm runs in $O(|t| + ( r_t \times r_p) \alpha(r_t) \log(r_t) )$ time, where  $r_t$ and $r_p$ denote the number of runs in the encodings for $t$ and $p$ respectively and $\alpha$ is the inverse of the Ackermann's function. Their solution would perform fast for binary strings or in general small number of alphabets. But it lags in alternate ordering of alphabets.

There is a decision variant of this problem, where, given $k$, the goal is to find at every position $i$ of the given text $t$ whether the pattern $p$ $p$-matches at that position having a tolerance limit $\leq k$. Hazay et al. solved this decision variant in $O(|t|k^{1.5} + |p|k \log (|p|))$ time \cite{hazay2007approximate}. Their solution performs slower with large input of tolerance limit $k$.\par 

In this paper we revisit the parameterized matching problem. We present two independent solutions for two cases -- (a) For any value of the tolerance limit $k$; (b) For tolerance limit 1 (i.e., $k =1$). Our first solution is deterministic solution and runs in $O(|t| |\Sigma|^2 \sqrt{|\Sigma|}\log{(|\Sigma|\cdot|t|)})$ (Section \ref{sec:general}). It is a slight improvement over the algorithm by Hazay et al. for $k=\tilde{\Omega}(|\Sigma|^{5/3})$. Note that, our solution does not depend on $k$ and it can be easily parallelized. Our second solution (i.e. for the single mismatch case) is probabilistic and runs in $O(|t|\log{(|t|)})$ time (Section \ref{sec:single_mismatch}). This is a rolling-hash based solution and the collision probability is $\frac{\left(|t|\log (|t|)\right)^2}{m_1 \times m_2}$, where $m_1$ and $m_2$ are the moduli used to hash input. Thus it is expected that this solution will work well in practice if large moduli are selected.

\section{Preliminaries}\label{Preliminaries}\label{sec:preliminaries}
We follow the usual notations from the stringology literature. A string of length $n$, $s = s_{1}s_{2}...s_{n}$, is a sequence of characters drawn from an alphabet $\Sigma$. We use $s'= s_{i}s_{i+1}...s_{j=i+\ell-1}$ to denote a substring of length $\ell$; if $i=1$ ($j=n$), then $s'$ is a prefix (suffix) of $s$. Throughout this manuscript, we use $t$ and $p$ to denote text and pattern strings, respectively. The definitions of a parametrized string and parametrized match or $p$-match for short are already given in the earlier section and hence are not repeated here. The following notations wil be useful while we describe our solutions.
\begin{itemize}
\item $R = P \oplus Q$ refers to the multiplication of polynomial $P$ and $Q$. Such operations can be implemented in $O(n\log n)$ time \cite{brigham1988fast}, where $n$ is the degree of resulting polynomial.
\item $mwm(G)$ refers to the size of maximum weighted matching in graph $G$.
\item $prev_s(i)$ refers to the index of the most recent occurrence of $s_i$ before the current index $i$. Here, $1 \le prev_s(i) < i$. If it is the first occurrence of $s_i$, then $prev_s(i) = -1$.
\item $next_s(i)$ refers to the index of the most recent occurrence of $s_i$ after the current index $i$. Here, $i < next_s(i) \le n$.  If it is the last occurrence of $s_i$, then $next_s(i) = -1$.
\end{itemize}

We will be using an encoding technique for parameterized strings, proposed in \cite{baker1996parameterized,ganguly2017pbwt} as follows. This encoding will be helpful when we handle the restricted version of the problem when only a single mismatch is allowed. Let, $p=p_1p_2\ldots p_n$ be the parameterized string of length $n$, then the encoded string $s=s_1s_2\ldots s_n$, where
\begin{equation*}
s_i = 
\begin{cases}
p_i & \text{if } p_i \text{ is static symbol, } p_i \in \Sigma_s\\ 
0 & {\text{if } p_i \in \Sigma_p \text{ and it is the first occurrence }} \text{of } p_i \\
i-j & \text{if } p_i \in \Sigma_p \text{ and } prev_s(i) = j\
\end{cases}
\end{equation*}
After encoding it is clear that two parameterized strings are identical if and only if their encoded strings are identical. So with this encoding parameterized string matching problem can be converted into static string matching problem.

We end this brief section with a formal definition of the problems we tackle in this paper. 
\begin{problem}
    Given a text $t$, a pattern $p$ and a tolerance value $k$, determine for every $i\in [1,|t|-|p|+1]$ whether there is a parameterized match between sub-string $t_it_{i+1}$$\dots$ $t_{i+|p|-1}$ and pattern $p$, having no more than $k$ mismatches.
\end{problem}

The restricted version of the problem when only a single mismatch is allowed is formally defined below.
\begin{problem}
\label{problem:single_mismatch_pattern}
Given a text $t$ and a pattern $p$, determine for each $i \in [1, |t| - |p| + 1]$ whether there is a parameterized match between sub string $t_{i}t_{i+1}...t_{i+|p|-1}$ and pattern $p$, allowing at most one mismatch.
\end{problem}

\section{Parameterized Matching allowing Mismatches}\label{Matching with Any Number of Mismatch}\label{sec:general}
Suppose, we are given two strings $x$ and $y$ of equal length $\ell$. Clearly, if we can choose a subset of positions from $[1,\ell]$ of size at most $k$ and discard those positions in both strings to get $x'$ and $y'$ respectively so that strings $x'$ and $y'$ have a $p$-match, then we are done.

\begin{lemma}
    If one setting starting at $i_1$ requires some positions \textbf{of the text} to be discarded, another setting starting at $i_2$ may not require discarding those same positions in order to obtain the least number of mismatches. This lemma also applies to patterns.
\end{lemma}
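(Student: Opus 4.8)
The statement is an existential (``may not require'') rather than a universal claim, so the natural strategy is to exhibit a single explicit counterexample: two strings together with two starting positions $i_1 \ne i_2$ whose overlapping windows force different sets of text positions to be discarded in any minimum-mismatch $p$-match. Proving the lemma therefore reduces to constructing such an instance and determining, for each alignment, what an optimal (least-mismatch) bijection looks like.

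The plan is to reduce each fixed alignment to a counting problem. Once a starting position $i$ is fixed, every position of the window pairs one pattern symbol with one text symbol; grouping the positions according to which pattern symbol meets which text symbol yields co-occurrence counts, and a mismatch-minimizing bijection $\pi$ is precisely one that maximizes the total number of matched positions (this is the maximum-weighted-matching viewpoint underlying $mwm(G)$). The discarded positions are then exactly the window positions whose pattern symbol is not sent by $\pi$ to the text symbol it is paired with. I would pick a small parameterized alphabet and short strings with two overlapping windows, say $i_1$ and $i_2 = i_1 + 1$, arranged so that at $i_1$ a particular text position must be discarded while at $i_2$ that very position is matched. A concrete choice that works is $p = abaa$ and $t = abacc$ with $i_1 = 1$, $i_2 = 2$: at $i_1$ the unique optimum is $\pi(a)=a,\ \pi(b)=b$, which discards text position $4$, whereas at $i_2$ the unique optimum is $\pi(a)=c,\ \pi(b)=a$, which discards text position $2$ and keeps text position $4$.

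The step requiring the most care is ruling out that the discrepancy is merely an artifact of tie-breaking among equally good bijections; I must show that \emph{no} minimum-mismatch solution at $i_2$ discards the position discarded at $i_1$. I would secure this by choosing symbol multiplicities so that the governing assignment is \emph{strictly} optimal at each alignment -- for instance, letting pattern symbol $a$ co-occur twice with $a$ at $i_1$ (so $\pi(a)=a$ strictly wins) and twice with $c$ at $i_2$ (so $\pi(a)=c$ strictly wins) -- which pins down the optimal number of matches and forces the stated discard sets. Finally, the addendum ``this also applies to patterns'' follows from the symmetry of $p$-matching under interchanging the roles of $t$ and $p$: reading the same phenomenon from the pattern's side, or applying the mirror-image construction, produces a pattern position that is discarded at one alignment but not at another, completing the argument.
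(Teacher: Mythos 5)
Your proposal is correct and takes essentially the same route as the paper: both establish this existential claim by exhibiting an explicit counterexample and checking, via the co-occurrence-count/maximum-matching viewpoint, that the optimal bijections at the two alignments are strictly optimal and force different discard sets (the paper uses $t=abcbbbaaaca$, $p=deeeef$ with $i_1=1$, $i_2=6$; your smaller instance $p=abaa$, $t=abacc$ with $i_1=1$, $i_2=2$ verifies correctly, and the same instance also witnesses the pattern-side claim directly, since pattern position $4$ is discarded at $i_1$ but matched at $i_2$). Your explicit care in ruling out tie-breaking among equally good bijections is a point the paper merely asserts (``this is the only possible way''), but the underlying argument is identical.
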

\begin{proof}
We will prove this for text by representing a counter-example. 

Let $t=abcbbbaaaca$, $p=deeeef$ and $k=2$. We assume there are no static characters.
Figure \ref{fig:setting_at_1} shows the illustration when $p$ is set at position $i=1$ of the text. We see that we must discard $3$rd and $6$th position of the text (similarly, $3$rd and $6$th position of the pattern) to have a parameterized match. This is the only possible way. Whereas, when the $p$ is placed at position $i=6$ of the text, we must discard 10th and 11th position of the text (similarly, 5th and 6th position of the pattern). Figure \ref{fig:setting_at_6} shows the illustration.
\end{proof}

\begin{figure}[h]
    \centering
    \subfloat[Setting $p$ at $i=1$ of $t$]{
        \begin{tikzpicture}[scale=0.5, every node/.style={minimum height=1cm, text depth=0.5ex}]
          \def\strA{abcbbbaaaca}
          \def\strB{deeeef}
          \foreach \i/\char in {1/a,2/b,3/c,4/b,5/b,6/b,7/a,8/a,9/a,10/c,11/a} {
            \draw (\i, 2) rectangle (\i+1, 3) node[midway] {\char};
          }
          
          \foreach \i/\char in {1/d,2/e,3/e,4/e,5/e,6/f} {
            \draw (\i, 0) rectangle (\i+1, 1) node[midway] {\char};
          }
          \foreach \i/\char in {1/\checkmark,2/\checkmark,3/$\times$,4/\checkmark,5/\checkmark,6/$\times$} {
            \draw (\i, -2) rectangle (\i+1, -1) node[midway] {\char};
          }
        \end{tikzpicture}
        \label{fig:setting_at_1}
    }

    \subfloat[Setting $p$ at $i=6$ of $t$]{
    \begin{tikzpicture}[scale=0.5, every node/.style={minimum height=1cm, text depth=0.5ex}]
      \def\strA{abcbbbaaaca}
      \def\strB{deeeef}
    

      \foreach \i/\char in {1/a,2/b,3/c,4/b,5/b,6/b,7/a,8/a,9/a,10/c,11/a} {
        \draw (\i, 2) rectangle (\i+1, 3) node[midway] {\char};
      }
      
      \foreach \i/\char in {1/d,2/e,3/e,4/e,5/e,6/f} {
        \draw (\i+5, 0) rectangle (\i+6, 1) node[midway] {\char};
      }
      \foreach \i/\char in {1/\checkmark,2/\checkmark,3/\checkmark,4/\checkmark,5/$\times$,6/$\times$} {
        \draw (\i+5, -2) rectangle (\i+6, -1) node[midway] {\char};
      }
    \end{tikzpicture}
    \label{fig:setting_at_6}
    }
    \caption{(a) Text $t=abcbbbaaaca$ and pattern $p=deeeef$ have aligned at 1st position. Here $k=2$ and possible way of matching has been shown. 3rd and 6th position have been discarded. (b) Same pattern has been aligned at 6th position. We see this time 5th and 6th have been discarded to have a parameterized match.}
    \label{fig:alignment}
\end{figure}
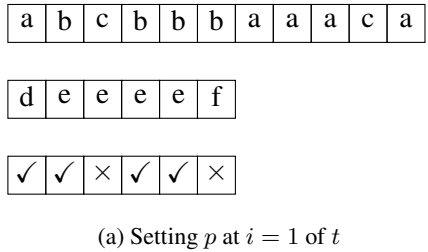
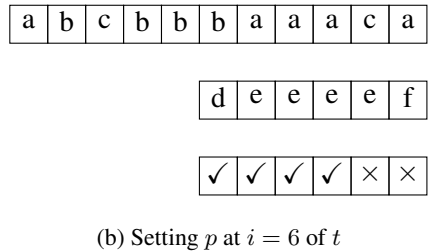

We will be constructing a bipartite graph and utilize the concept of maximum matching as can be seen from the following lemma.  
\begin{lemma}
    Consider two strings $x$ and $y$ of equal length $\ell$. We construct a weighted bipartite graph $G=(U \cup V, E)$ where $U= V = \Sigma_p$. The weight of the edge between $a\in U$ and $b \in V$ is the number of positions $i$ such that $x_i=a$ and $y_i=b$. Then, a maximum matching in graph $G$ corresponds to the minimum number of mismatch.
\end{lemma}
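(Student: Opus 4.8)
The plan is to exhibit an explicit weight-preserving correspondence between bijections $\pi:\Sigma_p\to\Sigma_p$ (which are exactly the witnesses for a $p$-match) and perfect matchings of $G$, and then to read off the number of mismatches directly from the matching weight. First I would recall that for a fixed bijection $\pi$ a position $i$ counts as a \emph{match} precisely when $\pi(x_i)=y_i$, so the quantity we want to minimise — the number of mismatches — equals $\ell$ minus the number of matched positions. Hence minimising mismatches over all admissible $\pi$ is the same task as maximising the number of matched positions over all bijections $\pi$.

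Next I would set up the bijection--matching dictionary. Since $U=V=\Sigma_p$, a bijection $\pi$ is in one-to-one correspondence with the perfect matching $M_\pi=\{(a,\pi(a)):a\in\Sigma_p\}$ of the complete bipartite graph on $U\cup V$. The key computation is that the weight of this matching equals the number of matched positions: by construction $w(a,b)$ counts the positions $i$ with $x_i=a$ and $y_i=b$, so
\begin{equation*}
w(M_\pi)=\sum_{a\in\Sigma_p} w\bigl(a,\pi(a)\bigr)=\bigl|\{\,i:\ y_i=\pi(x_i)\,\}\bigr|,
\end{equation*}
because each position $i$ contributes to exactly one term, namely $w(x_i,\pi(x_i))$, and it contributes $1$ to that term exactly when $y_i=\pi(x_i)$. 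Therefore the number of mismatches induced by $\pi$ is $\ell-w(M_\pi)$, and minimising it over all bijections coincides with computing $\max_M w(M)$ over perfect matchings $M$ of $G$.

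Finally I would reconcile the phrase ``maximum matching'' with ``maximum weighted perfect matching''. Because $|U|=|V|=|\Sigma_p|$ and every absent edge may be treated as an edge of weight $0$, any matching extends to a perfect matching of the same weight; consequently the maximum-weight matching of $G$ is attained by a perfect matching and its value equals $\max_M w(M)$. Combining this with the displayed identity yields that the minimum number of mismatches equals $\ell-mwm(G)$, which is the content of the lemma.

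I expect the main subtlety to be exactly this last reconciliation: one must verify that the optimal matching can be taken to be perfect, so that it genuinely encodes a bijection rather than a partial map, and — if static symbols occur in $x$ or $y$ — that positions involving $\Sigma_s$ are accounted for separately. At such positions $\pi$ acts as the identity, so they contribute a fixed number of mismatches independent of the matching on $\Sigma_p$, which can simply be added on top of the analysis above. Everything else is the routine weight bookkeeping already displayed.
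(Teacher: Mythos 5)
Your proof is correct, and it takes a cleaner, more direct route than the paper's. The paper proves the two inequalities $\ell - mwm(G) \ge k'$ and $\ell - mwm(G) \le k'$ separately, each by a short (and rather informal) contradiction argument, after asserting up front that a maximum-weight matching ``corresponds to a bijection.'' You instead build the weight-preserving dictionary $\pi \leftrightarrow M_\pi$ explicitly and verify the identity $w(M_\pi)=\bigl|\{i : y_i=\pi(x_i)\}\bigr|$ by direct bookkeeping, from which the lemma follows as a single equality $\min_\pi(\text{mismatches}) = \ell - \max_M w(M)$ rather than two one-sided contradictions. Notably, your version repairs a point the paper actually gets wrong: the paper justifies the matching-to-bijection step by saying that otherwise ``some nodes would have been matched by two or more edges,'' which is a non sequitur (a matching by definition never uses a node twice); the genuine issue is that a maximum-weight matching may be \emph{partial}, and your observation that nonnegative weights allow any matching to be extended to a perfect matching of equal weight via zero-weight edges is precisely the missing argument. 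Your separate accounting for positions involving $\Sigma_s$ also matches what the paper does, though the paper handles it outside the lemma in the surrounding text. The only elision common to both treatments is that a $p$-match after discarding positions a priori yields only an injective partial map on the surviving symbols; since $\Sigma_p$ is finite, any partial injection extends to a bijection, so your identification of bijections with match witnesses is sound.
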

\begin{proof}
Here, we are actually claiming that $\ell - mwm(G)$ is the minimum number of mismatch (say $k'$). Now, note that, $mwm(G)$ corresponds to a bijection $B$ from $\Sigma_p \to \Sigma_p$, because, otherwise, some nodes would have been matched by two or more edges.

Now we first prove (by contradiction) that $\ell-mwm(G)$ can not be less than $k'$ as follows. Suppose, for the sake of contradiction, $\ell-mwm(G)<k'$. This is equivalent to $mwm(G)>\ell-k'$. As $\ell$ is the length of both strings and $k'$ is the minimum number of mismatch, $\ell-k'$ must be the maximum possible match (i.e., maximum possible number of positions which are not discarded to have a parameterized matching). If we preserve only those positions $i$ in $x$ and $y$ such that there is a match between $x_i$ and $y_i$ taken in $mwm(G)$, we can make $\ell-k'$ larger. Hence, we found a contradiction.

Finally, it remains to show that $\ell-mwm(G)$ can not be greater than $k'$. Again, we do it by contradiction as follows. Let, for the sake of contradiction, $\ell-mwm(G)>k'$. This is equivalent to $mwm(G)<\ell-k'$. It means that we found another bijection $B'=\Sigma_p \to \Sigma_p$ that makes the maximum possible match larger, a contradiction. 
\end{proof}

Let us see an example of constructing the graph for two strings of equal length $x=abcaaeebbcd$  and $y=adbeeaaddac$. Figure \ref{fig:example_string_graph_1} shows the graph. Edges of maximum matching has been thickened.
\begin{figure}[h]
\centering
\begin{tikzpicture}[>=stealth,thick, node distance=4em]
  \tikzset{node style/.style={circle,draw,minimum size=2.5em}}
  \node[node style] (a) at (0,8) {a};
  \node[node style] (b) at (0,6) {b};
  \node[node style] (c) at (0,4) {c};
  \node[node style] (d) at (0,2) {d};
  \node[node style] (e) at (0,0) {e};
  
  \node[node style] (A) at (6,8) {a};
  \node[node style] (B) at (6,6) {b};
  \node[node style] (C) at (6,4) {c};
  \node[node style] (D) at (6,2) {d};
  \node[node style] (E) at (6,0) {e};
  
  \draw[-] (a) -- node[left=55pt,above=3pt]{1} (A);
  \draw[-,line width=2pt] (a) -- node[left=55pt,above=75pt]{2} (E);
  \draw[-,line width=2pt] (b) -- node[left=55pt,above=36pt]{3} (D);
  \draw[-] (c) -- node[left=55pt,above=-39pt]{1} (A);
  \draw[-,line width=2pt] (c) -- node[left=55pt, above=-35pt]{1} (B);
  \draw[-,line width=2pt] (d) -- node[left=55pt, above=-20pt]{1} (C);
  \draw[-,line width=2pt] (e) -- node[left=55pt,above=-90pt]{2} (A);
\end{tikzpicture}
\caption{Example of constructing matching graph with $x=abcaaeebbcd$  and $y=adbeeaaddac$. Weights of the edges $(u,v)$ is the number position where $u$ and $v$ are aligned. Matched edges are thickened.}
\label{fig:example_string_graph_1}
\end{figure}
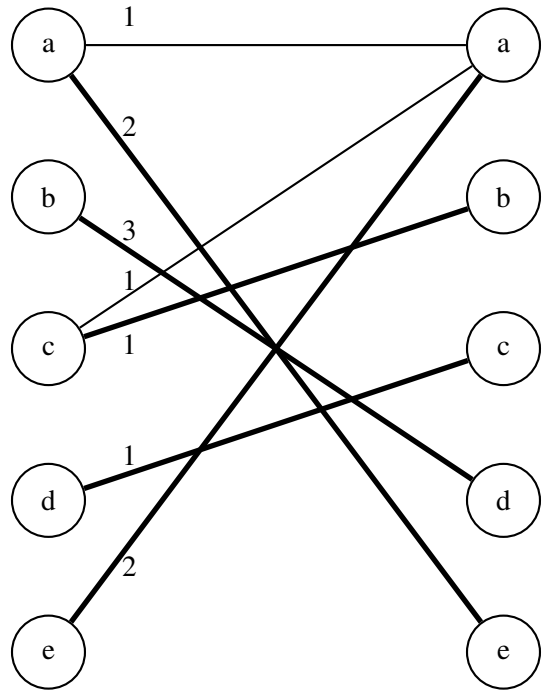
In this graph, by definition, weight of the edge $(u, v)$ will be the number of positions $i$ such that $x_i=u$ and $y_i=v$. For example, in positions $i \in \{2,8,9\}$ $x_i=b$ and $y_i=d$. Hence, edge $(b,d)$ has a weight of $3$. Maximum matching in this bipartite graph chooses edges $(a,e), (b,d), (c,b), (d,c), (e,a)$. It means that only positions $\{2,3,4,5,6,7,8,9,10\}$ will be preserved and positions $\{1,11\}$ will be discarded. So, the minimum number of mismatch is $2$.

For solving our problem, we can construct the graph $G_i$ as described above for every position $i \in [1,|t|-|p|+1]$ of the pattern $p$. If for a $i$ we have $mwm(G_i) \le k$, we can report that there is a parameterized matching for $p$ in the corresponding window. Note that, we do not need to include static symbols here, as they can not be matched with a different symbol. If $a \in \Sigma_s$, we will simply add number of positions $j$ where $x_j=y_j=a$.

We are now left with finding appropriate weights for the edges for every setting of the pattern and to do that efficiently. We use Fast Fourier Transform (FFT) for this purpose. 
Note that, there are $O(|\Sigma_p|^2)$ edges in each graph $G_i, i \in [1,|t|-|p|+1]$ and number of graphs is $O(|t|)$. So, we can not do it faster than $O(|t| \times |\Sigma_p|^2)$ time. As will be clear later, time complexity of our approach will be $O(|t| \times |\Sigma_p|^2 \times \log{|t|})$. In what follows, we focus on finding these weights employing FFT.

Now, to find out edge weights efficiently, we leverage polynomial multiplication. For each $a \in \Sigma$, we create a polynomial $P(a)$ ($Q(a)$) of $x$ with degree $|t|-1$ ($|p|-1$) as follows (Equations \ref{Pa} - \ref{Qa_case}): 
\begin{equation}\label{Pa}
P(a)=c_0+c_1x+c_2x^2+\dots+c_{|t|-1}x^{|t|-1}   \end{equation}
Here, for each $i\in [0,|t|-1]$,
\begin{equation}\label{Pa_case}
c_i= 
\begin{cases}
1 & \text{if } t_{i+1} = a\\ 
0 & \text{otherwise}\\ 
\end{cases}
\end{equation}
That is, $c_i=[t_{i+1}=a]$.\\
\begin{equation}\label{Qa}
Q(a)=d_0+d_1x+d_2x^2+\dots+d_{|p|-1}x^{|p|-1}
\end{equation}
Here, for each $i\in [0,|p|-1]$,\\
\begin{equation}\label{Qa_case}
d_i= 
\begin{cases}
1 & \text{if } p_{|p|-i} = a\\ 
0 & \text{otherwise}\\ 
\end{cases}
\end{equation}
That is, $d_i=[p_{|p|-i} = a].$ 
Now we have the following lemma.
\begin{lemma}
    Let $a,b\in \Sigma_p$ and $R(a,b)=P(a) \oplus Q(b)$. Co-efficient of $x^{|p|+i}$ in $R$ will be the weight of edge $(a,b)$ in the graph $G_{i+1}$.
\end{lemma}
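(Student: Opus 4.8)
The plan is to read the claimed equality straight off the definition of polynomial multiplication, recognizing $R(a,b)=P(a)\oplus Q(b)$ as a cross-correlation between the text-indicator sequence for $a$ and the pattern-indicator sequence for $b$. The reversal built into $Q(b)$ (its coefficient $d_j$ encodes $p_{|p|-j}$ rather than $p_{j+1}$) is exactly what converts the correlation we want into the convolution that $\oplus$ computes, so the whole argument reduces to expanding one coefficient and matching indices.

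Concretely, I would first write out the coefficient of a general monomial. By the definition of polynomial multiplication,
\[
[x^{m}]\,R(a,b)=\sum_{i'+j'=m} c_{i'} d_{j'} = \sum_{i'+j'=m} [t_{i'+1}=a]\,[p_{|p|-j'}=b],
\]
where $i'$ ranges over $[0,|t|-1]$ and $j'$ over $[0,|p|-1]$. Next I would reparametrize by the pattern position: set $r=|p|-j'$, so that $r\in[1,|p|]$, $j'=|p|-r$, and $[p_{|p|-j'}=b]$ becomes $[p_r=b]$. With $i'+j'=m$ fixed this forces $i'=m-|p|+r$, so the text index is $i'+1=m-|p|+r+1$. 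The summand is then $[t_{\,m-|p|+r+1}=a]\,[p_r=b]$, which equals $1$ exactly when aligning pattern position $r$ against that text position produces an $(a,b)$ pair. Summing over $r$ therefore counts precisely the positions defining the weight of edge $(a,b)$ in the window whose first pattern character ($r=1$) sits at text position $m-|p|+2$.

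The last step is to match this window against the graph named in the statement: the window begins at text position $m-|p|+2$, so for it to be $G_{i+1}$ (whose window, in the paper's convention, begins at text position $i+1$) one needs $m-|p|+2$ to equal that starting index, which fixes the exponent $m$. I expect this index bookkeeping --- reconciling the zero-based polynomial exponents, the one-based string positions, and the reversal in $Q$ --- to be the only real obstacle, and the exact exponent (whether it is $|p|+i$, as written, or $|p|+i-1$) should be pinned down by instantiating the computation above at the boundary case $i=0$, i.e.\ the window $t_1\cdots t_{|p|}$, and checking which power of $x$ recovers $\sum_{r=1}^{|p|}[t_r=a][p_r=b]$. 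Finally I would note that no out-of-range terms intrude: the degree bounds on $P(a)$ and $Q(b)$ automatically confine $r$ to $[1,|p|]$ and keep the text index inside $[1,|t|]$, so the sum never reaches outside the aligned window and the correspondence is exact.
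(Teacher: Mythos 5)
Your proposal follows the same route as the paper's proof --- expand the coefficient of the convolution, reindex by the pattern position $r=|p|-j'$, and read off the count of aligned $(a,b)$ pairs --- but your more careful bookkeeping exposes a genuine off-by-one in the lemma as stated. Carrying your computation to its conclusion: with $c_{i'}=[t_{i'+1}=a]$ (Equation (2)) and $d_{j'}=[p_{|p|-j'}=b]$ (Equation (4)), you get $[x^{m}]\,R(a,b)=\sum_{r=1}^{|p|}[t_{m-|p|+r+1}=a]\,[p_r=b]$, the weight of edge $(a,b)$ in $G_{m-|p|+2}$; solving $m-|p|+2=i+1$ yields $m=|p|+i-1$, so under the paper's own definitions it is the coefficient of $x^{|p|+i-1}$, not $x^{|p|+i}$, that equals the weight in $G_{i+1}$. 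You deferred this resolution to a boundary check at $i=0$, but your derivation already forces it; you should simply state the conclusion. The paper's proof reaches the stated exponent only through a compensating slip: in passing from $\sum_k c_{|p|+i-k}d_k$ to $\sum_k [t_{|p|+i-k}=a][p_{|p|-k}=b]$ it implicitly uses $c_j=[t_j=a]$ instead of $c_j=[t_{j+1}=a]$. The paper's worked example corroborates your version: the alignment counts $0,0,1,2,3,3$ of $a$ against $e$ at positions $1,\dots,6$ appear in $R(a,e)$ as the coefficients of $x^5,\dots,x^{10}$ (i.e., of $x^{|p|+i-1}$ for $G_{i+1}$), whereas the coefficients of $x^6,\dots,x^{11}$ read $0,1,2,3,3,3$. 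So your argument is correct and, once the final substitution is written out, actually sharper than the paper's; the only caveat is that adopting the corrected exponent requires the matching shift in Algorithm 1, whose use of the coefficient of $x^{m+i-1}$ for window $i$ is consistent with the lemma as (mis)stated rather than with Equations (1)--(4).
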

\begin{proof}
\begin{equation*}
\begin{split}
     &\text{ Co-efficient of $x^{|p|+i}$ in $R(a,b) $} \\
     & = \displaystyle \sum_{j=0}^{|t|-1} \sum_{k=0}^{|p|-1} c_j d_k [j+k=|p|+i] \\
     &= \displaystyle \sum_{k=0}^{|p|-1} \sum_{j=0}^{|t|-1} c_j d_k [j+k=|p|+i]\\
     &= \displaystyle \sum_{k=0}^{|p|-1} c_{|p|+i-k} d_k \\
     &= \displaystyle \sum_{k=0}^{|p|-1} [t_{|p|+i-k}=a] [p_{|p|-k}=b]\\
     &= \displaystyle \sum_{k'=1}^{|p|} [t_{i+k'}=a] [p_{k'}=b]\\
     &= \displaystyle \sum_{k'=1}^{|p|} [t_{i+k'}=a ~~\Lambda ~~ p_{k'}=b]\\
     &= \text{weight of edge $(a,b)$ in the graph $G_{i+1}$}
\end{split}
\end{equation*}
Hence, the result follows. 
\end{proof}
Intuitively, as we place the co-efficient of $P(a)$ walking frontward in $t$ and co-efficient of $Q(b)$ walking backward in $p$, co-efficient of $R(a,b)$ captures the number of positions where $a$ faces $b$ for the same setting. Similarly, if $a \in \Sigma_s$ (i.e., for static symbols), the co-efficient of $x^{|p|+i}$ in $R(a)=P(a) \oplus Q(a)$ will be the number of positions where the same static symbol $a$ is aligned in the $(i+1)$-th setting.

We create $P(a)$ and $Q(a)$ for every $a\in \Sigma$. For every setting of pattern starting at $i$, we put weight in the matching graph $G_i$ as described above. We can find the number of positions where the same static symbol has been aligned. We add it to $mwm(G_i)$. If the summation is at least $m-k$, we can obviously tell that there is parameterized matching starting at $i$. Finally, to multiply two polynomials $P(a)$ and $Q(b)$, we use FFT. A naive approach of polynomial multiplication would take $O(|t| \times |p|)$ time, whereas it will be $O(|t| \times \log{|t|})$ with FFT.

\noindent
Consider the same example of Figure \ref{fig:alignment}, where $t=abcbbbaaaca$, $p=deeeef$ and $\Sigma=\{a,b,c,d,e,f\}$.
\begin{equation*}
\begin{split}
         P(a) & = 1+x^6+x^7+x^8+x^{10}\\
         P(b) & = x+x^3+x^4+x^5\\
         P(c) & = x^2+x^9\\
         P(d) & = P(e) = P(f) = 0 \\ \\
         Q(a) & = Q(b) = Q(c) = 0 \\
         Q(d) & = x^5 \\
         Q(e) & = x+x^2+x^3+x^4\\
         Q(f) & = 1
\end{split} 
\end{equation*}
Now, as examples, we show the computation of  $R(a,e)$ and $R(c,e)$ as follows.
\begin{equation*}
\begin{split}
 R(a,e) &= P(a) \oplus Q(e)\\ 
        &= 
        x+x^2+x^3+x^4+x^7+2x^8+3x^9+ \\
        &\hphantom{{}=} 3x^{10}+3x^{11}+2x^{12}+x^{13}+x^{14}\\
\end{split}
\end{equation*}
\begin{equation*}
\begin{split}
             R(c,e)& = P(c) \oplus Q(e)\\ 
             &= x^3 + x^4+x^5+x^6+x^{10}+x^{11}+x^{12}+x^{13}
\end{split}
\end{equation*}
We can easily verify (from $R(a,e)$) that $a$ and $e$ are aligned $0,0,1,2,3$ and $3$ times for pattern $p$ aligned at position $1,2,\dots,6$ respectively (Figure \ref{fig:alignment}). Similarly, we can verify (from $R(c,e)$) that $c$ and $e$ are aligned $1,1,0,0,0$ and $1$ times for pattern $p$ aligned at position $1,2,\dots,6$ respectively (Figure \ref{fig:alignment}). Algorithm \ref{alg:general} shows the steps to find all window positions with parameterized mismatch no greater than the given tolerance value, $k$. We are using FFT for polynomial multiplication in Lines \ref{line:ab} and \ref{line:aa}. 

\begin{algorithm}[tph]
\caption{Algorithm to Solve Parameterized Matching with Any Number of Mismatches}

\label{alg:general}
\begin{algorithmic}[1]
\REQUIRE Text $t(t_1,t_2,\dots,t_n)$, Pattern $p(p_1,p_2,\dots,p_m), k$.
\ENSURE All matching locations.
\FOR { $a \in \Sigma$}

\STATE Construct polynomial $P(a) \text{ and } Q(a)$ from $t$ and $p$ respectively.
\ENDFOR
\FOR{$a \in \Sigma_p $}
    \FOR{$b \in \Sigma_p$}
        \STATE $R(a,b):=P(a) \oplus Q(b)$ \label{line:ab}
    \ENDFOR
\ENDFOR
\FOR{$a \in \Sigma_s $}
    \STATE $R(a,a):=P(a) \oplus Q(a)$ \label{line:aa}
\ENDFOR
\STATE $output:=\text{empty list}$
\FOR {$i=1\dots n-m+1$}
    \STATE $U:=\Sigma_P$
    \STATE $V:=\Sigma_P$
    \STATE $E=\text{empty list}$
    \FOR{$a \in \Sigma_p $}
        \FOR{$b \in \Sigma_p$}
            \STATE $w=\text{co-efficient of } x^{m+i-1} \text{ of } R(a,b)$ 
            \STATE $E:= E \cup (a,b,w) $
        \ENDFOR
    \ENDFOR
    \STATE {$matching:=$\textit{maximum-bipartite-weighted-matching ($U \cup V, E)$}} \label{line:mat}
    \FOR{$a \in \Sigma_s $}
        \STATE $matching:=matching$ + co-efficient of $x^{m+i-1}$ of $R(a,a)$
    \ENDFOR
    \IF{$matching\ge m-k$}
        \STATE $output:=output \cup i$
    \ENDIF
\ENDFOR
\RETURN $output$
\end{algorithmic}

\end{algorithm}

The correctness of the algorithm follows from the detailed description above. We now focus on analyzing the algorithm. We require $O(|\Sigma|^2)$ polynomial multiplications, where each polynomial is size of $O(|t|)$. So, time complexity for all FFTs will be $O( |t| \times \log{|t|}  \times |\Sigma|^2)$. We need to run maximum weighted bipartite matching $O(|t|)$ times. So, time complexity of all matching will be $O(|t| \times |\Sigma|^2 \times \sqrt{|\Sigma|} \times \log{(|t|\times |\Sigma|)})$, considering that we are using maximum weighted bipartite matching algorithm proposed by Gabow and Tarjan \cite{gabow1989faster}. 
Hence, the time complexity of our proposed solution is $O(|t| \times |\Sigma|^2 \times \sqrt{|\Sigma|} \times \log{(|t|\times |\Sigma|)})$. We see that it is dominated by the bipartite matching part. Now recall that, we are building $|\Sigma|^2$ polynomials each of size $O(|t|)$. Furthermore, we are constructing $O(|t|)$ bipartite graphs of $|\Sigma_p|$ nodes and $|\Sigma_p|^2$ edges. So, memory complexity is $O(|t| \times |\Sigma|^2)$. 

At this point a brief discussion is in order. Recall that, Apostolico et al.\cite{apostolico2007parameterized} solved this problem in $O(|t| + ( r_t \times r_p) \alpha(r_t) \log(r_t) )$ time complexity where $r_t$ and $r_p$ denote the number of runs in the encodings for $t$ and $p$ respectively and $\alpha$ is the inverse of the Ackermann's function. In the worst cases their solution runs in $O(|t|^2 \times \alpha(|t|) \times \log(r_t))$. For alphabets of constant size, the time complexity of our solution is $O(|t|\times\log(|t|))$. So our proposed solution performs better in this regard. Additionally, the solution proposed by Hazay, Lewenstein and Sokol runs in $O(nk^{1.5} + mk \log m)$ time, where $n = |t|$ and $m = |p|$. Our algorithm improves it when $k=\tilde{\Omega}(|\Sigma|^{5/3})$.

\subsection{Achieving faster runtime with parallelization}\label{Achieving faster run-time with parallelization}\label{sec:parallelization}
A significant side of our solution is that it can be parallelized to achieve a faster run-time for fixed text and pattern. In lines \ref{line:ab} and \ref{line:aa}, we are multiplying two polynomials generated from a pair of symbols independently. For every distinct pair, this multiplication can be done in parallel. A further optimization can be achieved by transforming all $P(a)$ and $Q(a)$ to point-value form of polynomials in $O(|\Sigma| \times |t| \times \log{|t|})$ time, then multiplying the point-value formed polynomials in parallel  in linear time. Similarly, in line \ref{line:mat}, we are finding the maximum bipartite weighted matching for all settings $i\in[1,|t|-|p|+1]$ independently. All of these matchings can be run in parallel. So, if number of processors $C$ increases upto $\min(|t|-|p|+1, |\Sigma|^2) $, our runtime gets divided by $C$.

\section{Parameterized Matching with Single Mismatch}
\label{sec:single_mismatch}
In this section, we present a new algorithm for the parameterized pattern matching problem when a single mismatch is permitted between the strings being compared. However, unlike the (general) algorithm presented in the previous section, which is deterministic, this algorithm is a probabilistic hashing based algorithm that runs in $O(|t|\log |t|)$ time for a text $t$ and pattern $p$. 

We acknowledge that the algorithm proposed by Hazay, Lewenstein, and Sokol achieves a runtime of $O(|t| + |p| \log (|p|))$ for this case, which is asymptotically at least as efficient as our algorithm. Nevertheless, we believe our solution deserves a place here due to its intrinsic algorithmic beauty and its adaptability to a wide range of string problems.

In what follows, we will first consider an easier version of the problem where $t$ and $p$ are of equal length.

\subsection{The equal-length case}
We first convert the problem into a static string matching problem as follows. First, we obtain the encoded string (as defined in Section \ref{sec:preliminaries}) of text $t$ and pattern $p$, which we will call $t'$ and $p'$, respectively. Then, we have the following two cases:
\begin{description}
\item [Case 1] If $t'$ and $p'$ are identical, they constitute a parameterized match without any mismatch. Thus, our problem is solved. Otherwise, we proceed to the next case.
\item [Case 2] If $t'$ and $p'$ are not identical, they can still constitute a match, since we allow one mismatch. In this case, we find the first position where the mismatch occurs between $t'$ and $p'$. Let $i$ be the position, where $1 \leq i \leq |t'|$. Then, again we have the following two cases:
\begin{description}
    \item [Case 2.A] We discard position $i$, update the encoded strings $t'$ and $p'$, and check if they constitute a valid match. If $t'$ and $p'$ are identical, then our problem is solved. Otherwise, we proceed to the next case (i.e., Case 2.B). But before going to Case 2.B, we need to discuss how to discard position $i$ and update $t'$ and $p'$ efficiently. We take the following steps.
    \begin{enumerate}
        \item Assign $t'_i = 0$ and $p'_i = 0$.
        \item If $t_i \in \Sigma_p$ and $next_{t}(i) \neq -1$, then assign 
            $$
                t'_j = 
                \begin{cases}
                0 & \text{if } prev_t(i) = -1\\ 
                j - k & \text{if } prev_t(i) = k\\ 
                \end{cases}
            $$
            where $j = next_{t}(i)$
        \item If $p_i \in \Sigma_p$ and $next_{p}(i) \neq -1$, then assign 
            $$
                p'_j = 
                \begin{cases}
                0 & \text{if } prev_p(i) = -1\\ 
                j - k & \text{if } prev_p(i) = k\\ 
                \end{cases}
            $$
            where $j = next_{p}(i)$
    \end{enumerate}
    By performing the first step, we ensure that position $i$ is not considered in the mismatch, and thus is counted as a match. The last two steps update the encoded text $t'$ and the pattern $p'$ so that they are consistent with the removal of position $i$. 
    \item [Case 2.B] In this case, we consider $t_i \in \Sigma_p$ and $p_i \in \Sigma_p$, otherwise there is no valid matching, and thus our problem is solved. Now, if $t_i$ actually matches $p_i$, then we need to discard a previous occurrence of $t_i$ or $p_i$ where it is matched with another character. Let, $prev_t(i) = j_1$ and $prev_p(i) = j_2$. $j_1 = -1$ and $j_2 = -1$ is impossible because this implies $t'_i = p'_i = 0$ and therefore, they cannot be a mismatch position. Now, if $j_1 = -1$ and $j_2 \ne -1$ ($j_1 \ne -1$ and $j_2 = -1$), we discard position $j_2$ ($j_1$), update the encoded strings $t'$ and $p'$, and check if they constitute a valid match. Finally, if $j_1 \ne -1$ and $j_2 \ne -1$ this implies $j_1 \ne j_2$ because $i$ is the first mismatch position and we then cannot find a valid match because we have to discard at least two positions $j_1$ and $j_2$.
        
\end{description}
\end{description}

\begin{algorithm}[tph]
\caption{Equal Length Parameterized Pattern Matching with Single Mismatch}
\label{alg:parameterized_matching}

\begin{algorithmic}[1]

\REQUIRE Text $t$, Pattern $p$ of length $n$.
\ENSURE \textbf{TRUE} if there is a parameterized match allowing at most one mismatch, \textbf{FALSE} otherwise.

\STATE $t' \gets \text{ENCODE}(t)$ \COMMENT{Obtain the encoded string of $t$}
\STATE $p' \gets \text{ENCODE}(p)$ \COMMENT{Obtain the encoded string of $p$}
\IF{$t' = p'$}
\RETURN \textbf{TRUE} \COMMENT{Parameterized match without any mismatch}
\ELSE
\STATE $i \gets \text{FirstMismatchPosition}(t', p')$ \COMMENT{Find the first mismatch position}
\STATE $t', p' \gets \text{DiscardPosition}(t', p', i)$ \COMMENT{Discard position $i$ and update $t'$ and $p'$}
\IF{$t' = p'$}
\RETURN \textbf{TRUE} \COMMENT{Parameterized match after discarding position $i$}
\ENDIF
\STATE $j_1, j_2 \gets \text{PreviousOccurrences}(t, p, i)$ \COMMENT{Find previous occurrences of $t_i$ and $p_i$}
\IF{$j_1 = -1$ and $j_2 \neq -1$}
\STATE $t', p' \gets \text{DiscardPosition}(t', p', j_2)$ \COMMENT{Discard position $j_2$ and update $t'$ and $p'$}
\IF{$t' = p'$}
\RETURN \textbf{TRUE} \COMMENT{Parameterized match after discarding position $j_2$}
\ENDIF
\ELSIF{$j_1 \neq -1$ and $j_2 = -1$}
\STATE $t', p' \gets \text{DiscardPosition}(t', p', j_1)$ \COMMENT{Discard position $j_1$ and update $t'$ and $p'$}
\IF{$t' = p'$}
\RETURN \textbf{TRUE} \COMMENT{Parameterized match after discarding position $j_1$}
\ENDIF
\ENDIF
\RETURN \textbf{FALSE} \COMMENT{No valid parameterized match}
\ENDIF
\end{algorithmic}
\end{algorithm}

Algorithm \ref{alg:parameterized_matching} shows all the steps for equal length parameterized string matching allowing a single mismatch. Since we can solve both cases described above in linear time, the algorithm runs in $O(|t|)$ time.

\subsection{Extension to Any Length Strings}
To efficiently solve the problem for each $i \in [1, |t| - |p| + 1]$, we use polynomial hashing of strings which is a probabilistic algorithm for string matching along with a segment tree data structure. The following steps outline our algorithm:
\begin{enumerate}
    \item We first compute the encoded strings of text $t$ and pattern $p$, denoted as $t'$ and $p'$, respectively and compute the polynomial hashing array for pattern $p'$.
    \item Then, we create a segment tree data structure for the polynomial hashing array of text $t'$ to enable efficient updates to the hashing array. To check whether any two substrings, $t''$ of $t'$ and $p''$ of $p'$ match, we can obtain the hash value of $p''$ in $O(1)$ time and the same of $t''$ in $O(\log |t'|)$ time and simply check whether the values are equal or not. The $O(\log |t'|)$ factor in the latter case comes from querying the segment tree data structure for text $t'$.
    \item We now iterate over $i$ from $1$ to $|t'|-|p'|+1$ and for each $i$, compute whether there exists a parameterized match between the substring $t_{i}t_{i+1}...t_{i+|p|-1}$ and pattern $p$, allowing at most one mismatch. At position $i$, we will consider the segment tree hash values for the encoded text $t'$ to be consistent with text $t$ for $t_it_{i+1}\ldots t_{|t|}$. Therefore, we can use our previous algorithm for equal length strings. However, the only challenge is to efficiently find the first position of mismatch. \textbf{}
    \item We now iterate over $i$ from $1$ to $|t'|-|p'|+1$ and for each $i$, compute whether there exists a parameterized match between the substring $t_{i}t_{i+1}...t_{i+|p|-1}$ and pattern $p$, allowing at most one mismatch. At position $i$, we will consider the segment tree hash values for the encoded text $t'$ to be consistent with text $t$ for $t_it_{i+1}\ldots t_{|t|}$. Therefore, we can use our previous algorithm for equal length strings. Now, to find the first position of the mismatch efficiently we simply apply a binary search on $h_s$ and $h_t$. This works, because, the polynomial hashing arrays $h_s$ and $h_t$ must differ at the position where $s$ and $t$ differ. Thus, the first position of a mismatch can be found in $O\left(\log(|p|) \times \log (|t|)\right)$ time for each $i$. The first $O(\log (|p|))$ factor comes from binary search and the second $O(\log (|t|))$ factor comes from the query for the hash value in the segment tree. 
    \item We will perform the final step of the algorithm to ensure that the segment tree hash values for the encoded text $t'$ remain consistent with the text $t$ for $t_it_{i+1}\ldots t_{|t|}$ during the transition from $i-1$ to $i$. We will consider the following two cases:
    \begin{enumerate}
        \item If $next_t(i-1) = -1$, then the values are already consistent, and no further action is required.
        \item If $next_t(i-1) = j$ where $i-1 < j \le |t|$, then we need to update $t'_j = 0$ in the segment tree. This is because we are no longer considering position $i-1$, so $next_t(i-1)$ becomes the first occurrence of $t_{i-1}$ in the remaining part of the text $t_{i}t_{i+1}\ldots t_{|t|}$.
    
    \end{enumerate}
\end{enumerate}

The overall time complexity of this algorithm is $O\left(|t| + \left(|t|-|p|+1\right)\times \log(|p|) \times \log (|t|) \right)$ $\approx$ $O\left(|t|\log^2 (|t|) \right)$.

\subsection{Improving the Runtime}
The main bottleneck of our solution lies in finding the first position of a mismatch for each $i \in [1, |t| - |p| + 1]$ which 
involves a binary search and a segment tree query in each iteration of a binary search. However, we can eliminate the binary search and determine the position by descending the segment tree as follows. Recall that a segment tree is a binary tree where each non-leaf node has two children. A node representing the segment $[\ell \ldots r]$ has a left child representing the segment $[\ell \ldots {mid}]$ and a right child representing the segment $[{mid}+1 \ldots r]$, where ${mid} = \lfloor\frac{{\ell+r}}{2}\rfloor$. In our solution, each non-leaf node stores the sum of the hash values of its left and right children. We compare the hash value of the left child ($[\ell \ldots {mid}]$) with the hash value of the pattern substring $p_{\ell}p_{\ell+1}\ldots p_{{mid}}$. If the hash values are equal, it indicates that the first mismatch position is in the right child segment $[{mid}+1 \ldots r]$; otherwise, it is in the left child segment $[\ell \ldots {mid}]$. We continue to descend the segment tree accordingly until we reach a leaf node, which represents the first mismatch position. By using this property, we can eliminate the binary search completely and perform a descending traversal of the segment tree to efficiently determine the first mismatch position in $O(\log (|t|))$ time for each position $i \in [1, |t| - |p| + 1]$, thereby shaving a $\log |t|$ factor off from the overall running time. The total running time thus improves to $O(|t| \log |t|)$.

Algorithm \ref{alg:first_mismatch_position} takes as input a segment tree $T$, the text segment range $[\ell, r]$, and the pattern $p$. It recursively descends the segment tree to find the first mismatch position between the text segment and the pattern. The algorithm returns the first mismatch position. Algorithm \ref{alg:unequal_length_matching} presents all the steps for parameterized string matching, allowing for a single mismatch.

\begin{algorithm}[tph]
\caption{Find First Mismatch Position}
\label{alg:first_mismatch_position}
\begin{algorithmic}[1]

\REQUIRE Segment tree $T$, Text segment range $[\ell, r]$, Pattern $p$.
\ENSURE First mismatch position.

\IF{$\ell = r$}
  \STATE \textbf{return} $\ell$ \COMMENT{Leaf node represents the first mismatch position}
\ENDIF

\STATE $m \gets \left\lfloor \frac{{\ell+r}}{2} \right\rfloor$ \COMMENT{Midpoint of the segment}
\STATE $leftHash \gets \text{HashValue}(T[\text{leftChild}])$ \COMMENT{Hash value of the left child segment}
\STATE $patternHash \gets \text{HashValue}(p[\ell \ldots m])$ \COMMENT{Hash value of the pattern substring}

\IF{$leftHash = patternHash$}
  \STATE \textbf{return} $\text{FindFirstMismatchPosition}(T, [\text{m} + 1, r]$ $, p)$ \COMMENT{First mismatch is in the right child}
\ELSE
  \STATE \textbf{return} $\text{FindFirstMismatchPosition}(T, [\ell, \text{m}], p)$ \COMMENT{First mismatch is in the left child}
\ENDIF
\end{algorithmic}
\end{algorithm}

\begin{algorithm}[tph]
\caption{Parameterized Pattern Matching with Single Mismatch}
\label{alg:unequal_length_matching}

\begin{algorithmic}[1]
\REQUIRE Text $t$ of length $n$, Pattern $p$ of length $m$.
\ENSURE List $matches$ containing \textit{True} or \textit{False} for each $i \in [1, n-m+1]$.

\STATE $t' \gets \text{ENCODE}(t)$
\STATE $p' \gets \text{ENCODE}(p)$
\STATE $hash_p \gets \text{ComputeHashArray}(p')$ \COMMENT{Build the polynomial hashing array for $p'$}
\STATE $segmentTree \gets \text{BuildSegmentTree}(t')$ \COMMENT{Build segment tree over hashes for $t'$}

\STATE $matches \gets$ empty list

\FOR{$i \gets 1$ to $n - m + 1$}
\IF{$hash(t'_{i}t'_{i+1}...t'_{i+m-1}) = hash(p')$}
\STATE $matches[i] \gets \text{True}$ \COMMENT{Exact match without any mismatch}
\ELSE
\STATE $j \gets \text{FindFirstMismatchPosition} (segmentTree, [i , i+m-1], p')$
\STATE $t', p' \gets \text{DiscardPosition}(t', p', j)$
\IF{$t' = p'$}
\STATE  $matches[i] \gets \text{TRUE}$  \COMMENT{Parameterized match after discarding position $j$}
\ELSE
\STATE $j_1, j_2 \gets \text{PreviousOccurrences}(t, p, j)$ \COMMENT{Find previous occurrences of $t_j$ and $p_j$}
\IF{$j_1 = -1$ and $j_2 \neq -1$}
\STATE $t', p' \gets \text{DiscardPosition}(t', p', j_2)$ \COMMENT{Discard position $j_2$ and update $t'$ and $p'$}
\IF{$t' = p'$}
\STATE $matches[i] \gets \text{TRUE}$ \COMMENT{Parameterized match after discarding position $j_2$}
\ENDIF
\ELSIF{$j_1 \neq -1$ and $j_2 = -1$}
\STATE $t', p' \gets \text{DiscardPosition}(t', p', j_1)$ \COMMENT{Discard position $j_1$ and update $t'$ and $p'$}
\IF{$t' = p'$}
\STATE $matches[i] \gets \text{TRUE}$  \COMMENT{Parameterized match after discarding position $j_1$}
\ENDIF
\ENDIF
\STATE $matches[i] \gets \text{FALSE}$ \COMMENT{No valid parameterized match}
\ENDIF
\ENDIF
\IF{$i < n - m + 1$ and $\text{next}_t(i-1) \neq -1$}
    \STATE $j \gets \text{next}_t(i-1)$
    \STATE $\text{UpdateSegmentTree}(segmentTree, j, 0)$ \COMMENT{Update segment tree for $t'$}
\ENDIF
\ENDFOR

\RETURN $matches$
\end{algorithmic}
\end{algorithm}

\subsection{Collision Probability Analysis}
In this section, we analyze the collision probability (i.e., the probability that two different strings have the same hash value) of the algorithm presented in the previous section. 
The probability of two separate strings colliding given a prime modulus $m$ in polynomial rolling hash is $\approx 1/m$ \cite{karp1987efficient}. For larger prime values of $m$, this characteristic ensures a low probability of collisions. But if we compare a string $s$ with $c$ different strings, then the collision probability is $\approx \frac{c}{m}$.

In our algorithm, we hash the text and pattern strings using polynomial rolling hash. The number of comparisons between the text $t$ and the pattern $p$ for a fixed position $i$ is approximately $\log(|t|)$. Taking into account all positions $i \in [1,|t|-|p|+1]$, the total number of hash comparisons between the text $t$ and the pattern $p$ is approximately $(|t|-|p|+1)\times \log(|t|)$, which simplifies to approximately $|t|\log (|t|)$. Therefore, for a fixed prime modulus $m$, the probability that our algorithm produces an incorrect result is approximately $\frac{|t|\log (|t|)}{m}$.

To further improve collision resistance, a technique known as double hashing \cite{singh2009choosing} can be used. Double hashing involves using two different hash moduli to generate the hash value. Using two independent hash moduli, the collision probability can be further reduced to $\frac{\left(|t|\log (|t|)\right)^2}{m_1 \times m_2}$, where $m_1$ and $m_2$ are the moduli used to hash input. This approach improves the algorithm's probability to tackle potential collisions but increases the runtime of the algorithm because it needs to compute the hash function twice.
We remark that, although collisions are theoretically possible due to the finite hash space, they are substantially less likely due to the chosen hash function, the prime modulus, and the double hashing technique. This is also evident from the experiments conducted as reported in the following section.

\subsection{Experimental Results}
We conducted some quick experiments by varying the modulo value across a range of small to large values for both single and double hashing techniques. In particular, for 10000 runs we generated random texts and patterns of size 10000 and 10 respectively on English (lower case) alphabet with a goal to determine the number of times our algorithm produced incorrect results by comparing them against the deterministic algorithm from Section \ref{sec:general}. Tables \ref{tab:single} and \ref{tab:double} reports the results. As expected, as the modulo value increases, the collision (drastically) reduces resulting in  reduced number of incorrect results; for high values, this quickly become zero, even quicker for double hashing.  


    \begin{table}[!ht]
    \caption{Number of incorrect results for different $modulo$ in single hashing }
    \label{tab:single}
    \centering
    \begin{tabular}{|l|c|}
    \hline
        Modulo & Number of incorrect results\\ \hline
        $1019$ & $1310$\\ \hline
        $100517$ & $8$ \\ \hline
        $1000000009$ & $0$  \\ \hline
    \end{tabular}
\end{table}
    \begin{table}[!ht]
    \caption{Number of incorrect results for different $modulo$ in double hashing}
    \label{tab:double}
    \centering
    \begin{tabular}{|l|l|c|}
    \hline
        Modulo $1$ & Modulo $2$ & Number of incorrect results\\ \hline
        $1019$ & $1613$ & $1$ \\ \hline
        $100517$ &$101467$ & $0$ \\ \hline
        $1000000009$ & $1000001887$ & $0$  \\ \hline
    \end{tabular}
\end{table}

\section{Conclusions}
\label{sec:conclusions}
In this paper, we have revisited the parameterized string matching problem that allows fixed mismatch tolerance. We addressed two cases: one for any mismatch limit and another for a single mismatch. For any number of mismatches, our run time depends on $|\Sigma|$ and unrelated to $k$. Future works can be done on the co-relation of the polynomial multiplications and bipartite matchings calculation done independently in this paper. Furthermore, in this paper, we have not used the fact that total sum of non-zero co-efficients in the formed polynomials is in $O(|t|)$ and total sum of weights in a built graph in $O(|t|)$. These facts can be brought in to have further improvements. For single mismatch, we developed a probabilistic hashing approach. We ensured a low probability of false positive matches by using double hashing. Attempting to solve the general case using the approach for single mismatch appears to have exponential running time, as new case arises for each mismatch position. Further investigation may be conducted along this line.

\bibliographystyle{fundam}
\bibliography{citations}

\begin{thebibliography}{10}
\providecommand{\url}[1]{\texttt{#1}}
\providecommand{\urlprefix}{URL }
\expandafter\ifx\csname urlstyle\endcsname\relax
  \providecommand{\doi}[1]{doi:\discretionary{}{}{}#1}\else
  \providecommand{\doi}{doi:\discretionary{}{}{}\begingroup \urlstyle{rm}\Url}\fi
\providecommand{\eprint}[2][]{\url{#2}}

\bibitem{mendivelso2020brief}
Mendivelso J, Thankachan SV, Pinz{\'o}n Y.
\newblock A brief history of parameterized matching problems.
\newblock \emph{Discrete Applied Mathematics}, 2020.
\newblock \textbf{274}:103--115.

\bibitem{fredriksson2006efficient}
Fredriksson K, Mozgovoy M.
\newblock Efficient parameterized string matching.
\newblock \emph{Information Processing Letters}, 2006.
\newblock \textbf{100}(3):91--96.

\bibitem{hazay2007approximate}
Hazay C, Lewenstein M, Sokol D.
\newblock Approximate parameterized matching.
\newblock \emph{ACM Transactions on Algorithms (TALG)}, 2007.
\newblock \textbf{3}(3):29--es.

\bibitem{mendivelso2013solving}
Mendivelso J, Kim S, Elnikety S, He Y, Hwang Sw, Pinz{\'o}n Y.
\newblock Solving graph isomorphism using parameterized matching.
\newblock In: String Processing and Information Retrieval: 20th International Symposium, SPIRE 2013, Jerusalem, Israel, October 7-9, 2013, Proceedings 20. Springer, 2013 pp. 230--242.

\bibitem{baker1993program}
Baker BS.
\newblock A program for identifying duplicated code.
\newblock \emph{Computing Science and Statistics}, 1993.
\newblock pp. 49--49.

\bibitem{baker1996parameterized}
Baker BS.
\newblock Parameterized pattern matching: Algorithms and applications.
\newblock \emph{Journal of computer and system sciences}, 1996.
\newblock \textbf{52}(1):28--42.

\bibitem{cole2000faster}
Cole R, Hariharan R.
\newblock Faster suffix tree construction with missing suffix links.
\newblock In: Proceedings of the thirty-second annual ACM symposium on Theory of computing. 2000 pp. 407--415.

\bibitem{apostolico2007parameterized}
Apostolico A, Erd{\H{o}}s PL, Lewenstein M.
\newblock Parameterized matching with mismatches.
\newblock \emph{Journal of Discrete Algorithms}, 2007.
\newblock \textbf{5}(1):135--140.

\bibitem{brigham1988fast}
Brigham EO.
\newblock The fast Fourier transform and its applications.
\newblock Prentice-Hall, Inc., 1988.

\bibitem{ganguly2017pbwt}
Ganguly A, Shah R, Thankachan SV.
\newblock pBWT: Achieving succinct data structures for parameterized pattern matching and related problems.
\newblock In: Proceedings of the Twenty-Eighth Annual ACM-SIAM Symposium on Discrete Algorithms. SIAM, 2017 pp. 397--407.

\bibitem{gabow1989faster}
Gabow HN, Tarjan RE.
\newblock Faster scaling algorithms for network problems.
\newblock \emph{SIAM Journal on Computing}, 1989.
\newblock \textbf{18}(5):1013--1036.

\bibitem{karp1987efficient}
Karp RM, Rabin MO.
\newblock Efficient randomized pattern-matching algorithms.
\newblock \emph{IBM journal of research and development}, 1987.
\newblock \textbf{31}(2):249--260.

\bibitem{singh2009choosing}
Singh M, Garg D.
\newblock Choosing best hashing strategies and hash functions.
\newblock In: 2009 IEEE International Advance Computing Conference. IEEE, 2009 pp. 50--55.

\end{thebibliography}
\end{document}